\newtheorem{thm}{Theorem}[section]
\newtheorem{lemma}[thm]{Lemma}
\newtheorem{cor}[thm]{Corollary}
\theoremstyle{definition}
\def \mco{{\mathcal O}}
\def \mcr{{\mathcal R}}
\def \Real {{\mathbb R}}
\def \Sphere {{\mathbb S}}
\def \Vol {\operatorname{Vol}}
\def \Tr {\operatorname{Tr}}
\def \comp {\operatorname{comp}}
\def \loc {\operatorname{loc}}
\def \rank {\operatorname{rank}}
\def \Complex {\mathbb{C}}
\def \bbone {\mathbb{1} }
\title [Resonances and balls ]
{Resonances and  balls in obstacle scattering 
with Neumann boundary conditions }
   \author { T. J. Christiansen}
\thanks{Partially supported by NSF grant DMS 0500267.}
\begin{document}
\begin{abstract}
We consider scattering by an obstacle in $\Real^d$, $d\geq 3 $ odd.  
We show that for the Neumann Laplacian if an obstacle has the same 
resonances as the ball of radius $\rho$ does, then the obstacle is a ball of
radius $\rho$.  We give related results for obstacles which are disjoint unions
of several balls of the same radius.
 \end{abstract}

\maketitle

\section{Introduction}
The purpose of this is note is to show that for obstacle scattering in 
$\Real^d$, $d\geq 3$ odd, a ball is uniquely determined by its resonances
for the Laplacian with Neumann boundary conditions.  We actually show a 
somewhat stronger result: If $\mco_1$ and $\mco_2$ have the same 
(Neumann) resonances and $\mco_1$ is the disjoint union of $m$ balls,
each of radius $\rho$, then so is $\mco_2.$

We fix some notation.  Let $\mco \subset \Real^d$ be a compact 
 set of dimension $d$ with smooth
boundary $\partial \mco$.  Let $\Delta_{\Real^d\setminus
\mco}$ denote the Laplacian with {\em Neumann} boundary conditions on
$\Real^d \setminus \mco$.  Set $R_{\mco}(\lambda)=(\Delta_{\Real^d\setminus
\mco}-\lambda^2)^{-1}$, with the convention that $R_{\mco}(\lambda)$ 
is bounded on $L^2(\Real^d \setminus \mco)$ when $\Im \lambda >0$.
Then, for odd $d$, it is known that $R_{\mco}(\lambda):
L^2_{\comp}(\Real^d \setminus \mco) \rightarrow 
L^2_{\loc}(\Real^d \setminus \mco)$ has a meromorphic continuation to 
$\Complex$.  Set
$$\mcr_{\mco}=\{ \lambda_0\in \Complex: \; R_{\mco}(\lambda) \;
\text{has a pole at }\; \lambda_0,\; \text{repeated according to 
multiplicity}\}.$$
The multiplicity can be defined via 
$$m_{\mco}(\lambda)=\rank \oint_{|z-\lambda|=\epsilon} R_{\mco}(z)dz,\;
0<\epsilon \ll 1.$$
Set $B(\rho)$ to be the closed ball of radius $\rho$ centered at the origin.

\begin{thm}\label{thm:ball}
If $d\geq 3$ is odd, and $\mco \subset \Real^d$ is a smooth compact 
set with $\mcr_{\mco}=\mcr_{B(\rho)}$, then $\mco$ is a translate of $B(\rho)$.
\end{thm}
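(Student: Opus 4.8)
The plan is to recover two geometric invariants of $\mco$ directly from the resonance set $\mcr_\mco$ — namely the volume $\Vol(\mco)$ and the boundary area $|\partial\mco|$ — and then to invoke the equality case of the isoperimetric inequality. The bridge between resonances and geometry is a trace formula. In odd dimensions one has a Poisson‑type summation formula in which the distribution
\[ u(t)=\sum_{\lambda\in\mcr_\mco}e^{-i\lambda t}, \]
which for $t>0$ converges (the resonances lie in $\Im\lambda<0$) and depends only on the multiset $\mcr_\mco$, is identified with the relative wave trace
\[ \Tr\!\left(\cos\!\bigl(t\sqrt{\Delta_{\Real^d\setminus\mco}}\bigr)-\cos\!\bigl(t\sqrt{\Delta_{\Real^d}}\bigr)\right). \]
Equivalently, I would pass through the scattering determinant $s_\mco(\lambda)=\det S_\mco(\lambda)$, whose divisor is $\mcr_\mco$ (together with its reflection under the functional equation $s_\mco(\lambda)\,s_\mco(-\lambda)=1$), and relate $s_\mco'/s_\mco$ to the relative heat trace by the Birman--Krein formula.

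First I would write down the short‑time expansion of the relative heat trace. For a smooth obstacle with Neumann conditions it has the form
\[ \Tr\!\left(e^{-t\Delta_{\Real^d\setminus\mco}}-e^{-t\Delta_{\Real^d}}\right)=a_0\,(4\pi t)^{-d/2}+a_1\,(4\pi t)^{-(d-1)/2}+O\!\left(t^{-(d-2)/2}\right),\quad t\to 0^+, \]
where $a_0=-\Vol(\mco)$ and $a_1$ equals $|\partial\mco|$ up to the universal nonzero constant carrying the sign appropriate to Neumann data. The coefficients of $t^{-d/2}$ and $t^{-(d-1)/2}$ are thus $\Vol(\mco)$ and $|\partial\mco|$ up to universal constants. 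The goal of the trace‑formula step is to show that these two leading coefficients are determined by $\mcr_\mco$ alone; granting this and using $\mcr_\mco=\mcr_{B(\rho)}$, I obtain $\Vol(\mco)=\Vol(B(\rho))$ and $|\partial\mco|=|\partial B(\rho)|$.

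With both invariants in hand the geometry is forced. The isoperimetric inequality gives $|\partial\mco|\ge c_d\,\Vol(\mco)^{(d-1)/d}$ with equality precisely for a ball, and a disconnected competitor is strictly worse for fixed volume. Hence the matched values compel equality, so $\mco$ is a single ball; comparing volumes (equivalently areas) then pins the radius to $\rho$. Since resonances are invariant under translating the obstacle, this identifies $\mco$ only up to translation, which is exactly the conclusion sought. The same scheme, matching $m$ identical copies, is what I would use for the disjoint‑union statement advertised in the introduction.

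The main obstacle is the middle step: rigorously extracting these two leading heat coefficients from $\mcr_\mco$. The difficulty is that the resonance set determines $s_\mco$ only up to a factor $e^{P(\lambda)}$ with $P$ a polynomial, so I must check that this ambiguity — together with the regularization inherent in the relative trace and the low‑energy (constant‑mode) subtleties peculiar to the exterior Neumann problem — cannot corrupt the $t^{-d/2}$ and $t^{-(d-1)/2}$ coefficients. Controlling it, by combining the functional equation $s_\mco(\lambda)s_\mco(-\lambda)=1$ and unitarity $|s_\mco(\lambda)|=1$ for $\lambda\in\Real$ (which constrain $P$) with the distributional convergence of $u(t)$ for $t>0$, is where the real work lies; once $\Vol(\mco)$ and $|\partial\mco|$ are secured, the isoperimetric endgame is immediate.
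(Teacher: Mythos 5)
There is a genuine gap, and it sits exactly at the step you yourself flag as ``where the real work lies'': recovering $\Vol(\mco)$ from $\mcr_{\mco}$. The constraints you invoke (unitarity and the functional equation $s_{\mco}(\lambda)s_{\mco}(-\lambda)=1$, plus the low-energy Neumann behavior) do reduce the ambiguity in the scattering determinant, but only to a single factor $e^{ic\lambda^d}$ with $c$ real --- this is precisely the paper's Lemma \ref{l:rsp}. That residual ambiguity is not harmless: when one converts the resonance data into heat coefficients via
$$\Tr\big(e^{-t\Delta_{\Real^d\setminus\mco}}-e^{-t\Delta_{\Real^d}}\big)
= \frac{1}{2\pi i}\int_{-\infty}^{\infty} e^{-t\lambda^2}\,\frac{d}{d\lambda}\log s_{\mco}(\lambda)\,d\lambda
+\frac12 \sum_{\lambda_j\in\mcr_{\mco}\cap\Real}e^{-t\lambda_j^2},$$
the unknown phase contributes $\frac{d}{d\lambda}(ic\lambda^d)=icd\,\lambda^{d-1}$, and since $d-1$ is even (d odd) the integral $\frac{1}{2\pi i}\int e^{-t\lambda^2}\,icd\,\lambda^{d-1}\,d\lambda$ is a nonzero \emph{real} multiple of $t^{-d/2}$ --- exactly the same power of $t$ as the volume coefficient $a_0$. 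So the one coefficient your argument needs, $\Vol(\mco)$, is precisely the one the resonances fail to pin down; nothing in the reality/unitarity constraints excludes this contamination, because it is real. (For Dirichlet conditions things are even worse: two unknown constants, which is why Hassell--Zworski's 3-dimensional argument is more delicate.)

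The paper avoids this trap by never using $a_0$: its Lemma \ref{l:heatinv} extracts only the boundary integrals $\Vol(\partial\mco)$, $\int_{\partial\mco}H_{\partial\mco}$, and $\int_{\partial\mco}\big(13H^2_{\partial\mco}+2\sum_1^{d-1}\kappa^2_{j,\partial\mco}\big)$ (the coefficients of $t^{-(d-1)/2}$, $t^{-(d-2)/2}$, $t^{-(d-3)/2}$, which the phase ambiguity cannot touch), and then replaces your isoperimetric endgame with a curvature-rigidity endgame: Cauchy--Schwarz gives $\big(\int H\big)^2\le \Vol(\partial\mco)\int H^2$ and pointwise $\sum\kappa_j^2\ge \frac{1}{d-1}H^2$, with equality for the ball, so equality of all three invariants forces $\partial\mco$ to be totally umbilic with constant mean curvature, i.e.\ a union of round spheres; the first two invariants then fix the number and the radius. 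Note also that even if you could recover volume and area, the isoperimetric route would only handle the single-ball Theorem \ref{thm:ball} and not Theorem \ref{thm:mballs}: for a disjoint union of $m\ge 2$ balls equality in the isoperimetric inequality fails, and volume plus area alone do not determine such a configuration, whereas the umbilicity argument does.
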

We note that a simple scaling argument shows that if $\rho_1\not = \rho_2$
then $\mcr_{B(\rho_1)} \not = \mcr_{B(\rho_2)} $ so that $B(\rho)$ is 
determined (up to translation) by its Neumann resonances.

Hassell and Zworski \cite{h-zw} proved that for the {\em Dirichlet}
Laplacian in $\Real^3$, if a connected set $\mco$ has the same 
resonances as $B(\rho)$, then $\mco$ is a translate of $B(\rho)$. It
seems that their argument also works for Neumann boundary 
conditions, again in dimension $3$ with the restriction that $\mco$ is 
connected. That we
can prove this result for all odd dimensions
 for Neumann boundary conditions follows
partly from 
the fact that for the Neumann case the resonances determine the determinant
of the scattering matrix up to one unknown constant, while for the Dirichlet
case there are two unknown constants-- compare Lemma \ref{l:rsp} to 
\cite{h-zw}.

We shall actually prove the 
following theorem, from which Theorem \ref{thm:ball} follows immediately.
\begin{thm}\label{thm:mballs} Let $d\geq 3$ be odd.
Let $\mco_1\subset \Real^d$ 
be a disjoint union of $m$ closed balls each of the same radius $\rho$,
and let $\mco_2\subset \Real^d$ be a smooth compact set.
If $\mcr_{\mco_1}=\mcr_{\mco_2}$, then $\mco_2$ is also a 
disjoint union of $m$ closed balls of radius $\rho$.
\end{thm}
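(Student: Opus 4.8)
The plan is to pass from the resonance set to the scattering determinant, read off geometric data from its high-energy asymptotics, and then invoke geometric rigidity. Write $s_{\mco}(\lambda)=\det S_{\mco}(\lambda)$ for the determinant of the (Neumann) scattering matrix. In odd dimensions $s_{\mco}$ is meromorphic on $\Complex$, with zeros and poles governed by $\mcr_{\mco}$, and it admits a Hadamard factorization whose exponential factor is a polynomial of controlled degree. The first step is to apply Lemma \ref{l:rsp}: since $\mcr_{\mco_1}=\mcr_{\mco_2}$, and since in the Neumann case the resonances fix this exponential polynomial up to a single additive constant, I would conclude $s_{\mco_1}(\lambda)=e^{c}\,s_{\mco_2}(\lambda)$ for some constant $c\in\Complex$. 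This is precisely where Neumann is more favorable than Dirichlet (one unknown constant rather than two), and it is what lets the argument run in every odd dimension.

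Second, I would extract geometric invariants from $\log s_{\mco}(\lambda)$. The scattering phase for an obstacle has a complete asymptotic expansion as $|\lambda|\to\infty$ whose coefficients reproduce the heat-trace invariants of the interior Neumann problem: the leading coefficients give $\Vol(\mco)$, the surface area $|\partial\mco|$, the total mean curvature $\int_{\partial\mco} H\,dS$, and the next curvature integral (together with a Gauss--Bonnet/Euler-characteristic contribution from $\partial\mco$). Because $s_{\mco_1}=e^{c}s_{\mco_2}$, the two expansions agree in every term except the single constant term absorbed by $c$. I would therefore obtain $\Vol(\mco_2)=m\,\Vol(B(\rho))$, $|\partial\mco_2|=m\,|\partial B(\rho)|$, matching total mean curvature, and, in dimension three, $\chi(\partial\mco_2)=\chi(\partial\mco_1)=2m$.

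Third comes the geometric endgame. For a single connected obstacle the equality case of the isoperimetric inequality already forces $\mco_2$ to be a ball, and matching volume fixes the radius as $\rho$; this recovers Theorem \ref{thm:ball}. For the disjoint-union statement the difficulty is that equal volume and equal surface area alone do not distinguish $m$ balls of radius $\rho$ from a single ball of the same total volume. Here I would use the further matched invariants: the Euler-characteristic datum bounds and ultimately fixes the number of components, while the equality case of a sharp Minkowski/Willmore-type inequality applied component-by-component forces each component to be a round sphere. Reconciling this with the matched volume and area then pins every radius to $\rho$ and the number of components to $m$.

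The main obstacle is precisely this last step: turning finitely many matched heat invariants into full rigidity for a possibly disconnected, non-convex obstacle. The subtlety is twofold---first, controlling the number of connected components of $\partial\mco_2$ from the scattering data, and second, running the equality analysis of the relevant geometric inequality component-wise without a priori convexity. I expect the cleanest route is to combine the isoperimetric inequality (for volume and area) with the total-mean-curvature/Willmore inequality and their rigidity, using the matched curvature and Euler-characteristic invariants both to force each component into the equality case and to count the components correctly.
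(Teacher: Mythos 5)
Your proposal follows the paper's broad strategy (Lemma \ref{l:rsp} plus heat/phase invariants plus geometric rigidity), but it contains a concrete error at the first step and leaves the decisive step unproven. The error: Lemma \ref{l:rsp} does \emph{not} say the resonances determine $s_{\mco}$ up to a multiplicative constant $e^{c}$. The unknown factor is $e^{ic_{\mco}\lambda^d}$, a phase of degree $d$ in $\lambda$; indeed the proof of the lemma shows $g^{(j)}(0)=0$ for all $j\leq d-1$, so it is exactly the \emph{top} coefficient of the exponential polynomial that remains free, not an additive constant term. Consequently $s_{\mco_1}(\lambda)=e^{i(c_{\mco_1}-c_{\mco_2})\lambda^d}s_{\mco_2}(\lambda)$, and when this is inserted into the trace formula
$\Tr\bigl(e^{-t\Delta_{\Real^d\setminus\mco}}-e^{-t\Delta_{\Real^d}}\bigr)
=\frac{1}{2\pi i}\int_{-\infty}^{\infty}e^{-t\lambda^2}\frac{d}{d\lambda}\log s_{\mco}(\lambda)\,d\lambda+\frac12\sum_{\lambda_j\in\mcr_{\mco}\cap\Real}e^{-t\lambda_j^2}$,
the unknown contributes $\frac{c_{\mco}d}{2\pi}\int e^{-t\lambda^2}\lambda^{d-1}d\lambda$, which is a \emph{nonzero} multiple of $t^{-d/2}$ because $d-1$ is even. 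So precisely the coefficient $a_0=\alpha_0\Vol(\mco)$ is contaminated: the resonances do \emph{not} determine $\Vol(\mco_2)$. This is why Lemma \ref{l:heatinv} lists only $\Vol(\partial\mco)$, $\int_{\partial\mco}H_{\partial\mco}$, and $\int_{\partial\mco}\bigl(13H^2_{\partial\mco}+2\sum_1^{d-1}\kappa^2_{j,\partial\mco}\bigr)$. Your isoperimetric argument for the connected case and the volume-matching in your endgame therefore have no basis.

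The second gap is the one you yourself flag as ``the main obstacle'': it is the actual content of the theorem, and the tools you propose cannot close it. The third invariant arrives only as the fixed combination $\int(13H^2+2\sum\kappa_j^2)$, so you cannot read off $\int H^2$ or $\chi(\partial\mco_2)$ separately (even in $d=3$, Gauss--Bonnet gives one linear relation between these two unknowns, not two); Minkowski-type rigidity needs convexity, which is not available; and Willmore/Euler-characteristic arguments are specific to dimension $3$, while the theorem is claimed for all odd $d$. The paper's resolution is elementary and is exactly where the same-sign structure of the coefficients $13$ and $2$ matters: chain the Cauchy--Schwarz inequality $\bigl(\int_{\partial\mco}H\bigr)^2\leq\Vol(\partial\mco)\int_{\partial\mco}H^2$ (equality iff $H$ is constant) with the pointwise inequality $\sum_1^{d-1}\kappa_j^2\geq\frac{1}{d-1}H^2$ (equality iff the point is umbilic). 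The union of equal balls $\mco_1$ attains equality in both, and since all three matched invariants agree, $\mco_2$ must attain equality as well; hence $\partial\mco_2$ is totally umbilic with globally constant mean curvature, i.e.\ $\mco_2$ is a disjoint union of $l$ balls of a common radius $\rho'$. The first two invariants, $\Vol(\partial\mco)$ and $\int_{\partial\mco}H$, then force $l=m$ and $\rho'=\rho$ --- no volume, no convexity, and no a priori control on the number of components is needed.
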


The proof of Theorem \ref{thm:mballs} uses heat coefficients.
These are closely related to the singularity at $t=0$ 
of the distribution
\begin{equation}\label{eq:wavetrace}
u_{\mco}(t)=\Tr \left( \cos\left(t\sqrt{\Delta_{\Real^d\setminus \mco}}\right)
-\cos\left(t\sqrt{\Delta_{\Real^d}}\right)\right).
\end{equation}
This is rather informal.  One way to make 
precise sense of (\ref{eq:wavetrace}) is 
as follows.  With $\rho'$ chosen sufficiently
large that $\mco \subset B(\rho')$,
\begin{multline}\label{eq:wavetrace2}
u_{\mco}(t)
= 
\Tr \left( \cos\left(t\sqrt{\Delta_{\Real^d\setminus \mco}}\right)
-\bbone_{\Real^d\setminus B(\rho')}
\cos\left(t\sqrt{\Delta_{\Real^d}}\right)\bbone_{\Real^d\setminus B(\rho')}
\right) + \Tr\left( 
\bbone_{B(\rho')} \cos\left(t\sqrt{\Delta_{\Real^d}}\right)
\bbone_{ B(\rho')}\right)
\end{multline}
where $\bbone_{E}$ is the characteristic function of the set $E$, 
compare \cite{zwpfe}.
The Poisson formula for resonances in odd dimensions is
$$u_{\mco}(t)=\sum_{\lambda_j\in \mcr_{\mco}}e^{i\lambda_j|t|},\; t\not = 0$$
 see 
\cite{b-g-r,sttwg,pbsp, zwpf}, 
and \cite{sj-zw2} for an application to the existence of resonances.
This shows that 
any singularities of the distribution $u_{\mco}(t)$
at nonzero time are determined
by the resonances of $\Delta_{\Real^d \setminus \mco}$.
When $\mco$ is not connected, we expect that the 
distribution of (\ref{eq:wavetrace2})
has singularities at nonzero times, see \cite{p-st}.  In particular,
 if $\mco$ consists of two
disjoint convex obstacles, the distribution (\ref{eq:wavetrace2}) has a singularity at 
twice the distance between them (e.g. \cite{g-m} or \cite[Section 6.4]{p-st}). 
 From this and Theorem \ref{thm:mballs}
we have
\begin{cor}Let $d\geq 3$ be odd, and let 
$\mco_1,\; \mco_2\subset \Real^d$ be smooth compact sets.
Suppose $\mco_1 $ is the disjoint union of $2$ balls of radius $\rho$ a
distance $\delta>0$ apart.  If $\mcr_{\mco_1}=\mcr_{\mco_2}$, then $\mco_2$ is 
obtained from $\mco_1$ by a rigid motion.
\end{cor}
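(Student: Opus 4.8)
The plan is to deduce the corollary from Theorem~\ref{thm:mballs} together with the Poisson formula, reducing the whole statement to the comparison of a single geometric parameter. First I would apply Theorem~\ref{thm:mballs} with $m=2$: since $\mco_1$ is a disjoint union of two balls of radius $\rho$ and $\mcr_{\mco_1}=\mcr_{\mco_2}$, the obstacle $\mco_2$ is also a disjoint union of two balls of radius $\rho$. Let $\delta$ and $\delta'$ denote the distances between the two components of $\mco_1$ and of $\mco_2$, respectively. A pair of ball centers at a prescribed mutual distance is determined up to an isometry of $\Real^d$, and that isometry carries the radius-$\rho$ balls of $\mco_1$ onto those of $\mco_2$ as soon as the distances agree; thus it suffices to show $\delta=\delta'$.

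To access $\delta$ and $\delta'$ I would pass to the wave trace. Because $\mcr_{\mco_1}=\mcr_{\mco_2}$, the Poisson formula
$$u_{\mco}(t)=\sum_{\lambda_j\in\mcr_{\mco}}e^{i\lambda_j|t|},\qquad t\neq 0,$$
gives $u_{\mco_1}(t)=u_{\mco_2}(t)$ as distributions on $\Real\setminus\{0\}$, so that $u_{\mco_1}$ and $u_{\mco_2}$ have the same singularities at every nonzero time. By the facts quoted from \cite{g-m} and \cite[Section~6.4]{p-st}, for a disjoint union of two convex obstacles a distance $\ell$ apart the distribution $u_{\mco}$ is singular at $t=2\ell$, the length of the primitive bouncing-ball orbit. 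Applied to $\mco_1$ and to $\mco_2$ this produces singularities of $u_{\mco_1}=u_{\mco_2}$ at $2\delta$ and at $2\delta'$.

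The main obstacle is to conclude $\delta=\delta'$ from the mere agreement of singular supports, since a priori each singular support also contains the lengths $2n\delta$ and $2n\delta'$ of iterated bouncing-ball orbits, as well as the diffractive (gliding) contributions from the individual spheres. Two features resolve this. Since both obstacles are built from balls of the \emph{same} radius $\rho$, their single-sphere diffractive singularities occur at the same times, namely integer multiples of $2\pi\rho$; and the bouncing-ball singularity furnished by \cite{g-m} is a genuine leading singularity, strictly stronger than the diffractive ones. Hence the singularity of $u_{\mco_1}$ at $2\delta$, being of bouncing-ball strength, can only be matched in $u_{\mco_2}$ by a bouncing-ball orbit of $\mco_2$, forcing $2\delta\in\{2n\delta':n\geq 1\}$ and so $\delta\geq\delta'$; by symmetry $\delta'\geq\delta$, whence $\delta=\delta'$. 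Combined with the first paragraph, $\mco_2$ is then a rigid motion of $\mco_1$. I expect the delicate point to be precisely this separation of the bouncing-ball singularity from the diffractive and iterated contributions, which is where the detailed singularity analysis of \cite{g-m} and \cite[Section~6.4]{p-st} enters.
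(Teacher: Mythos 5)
Your proof follows the paper's own route exactly: Theorem \ref{thm:mballs} with $m=2$ reduces everything to matching the separation distances, the Poisson formula transfers equality of resonances to equality of the wave traces away from $t=0$, and the singularity at twice the distance between two disjoint convex bodies (\cite{g-m}, \cite[Section 6.4]{p-st}) then forces $\delta=\delta'$. In fact you supply more detail than the paper, which simply asserts the corollary follows ``from this and Theorem \ref{thm:mballs}''; your separation of the primitive bouncing-ball singularity from iterated and diffractive contributions spells out precisely the step the paper leaves implicit.
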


This paper was inspired by \cite{h-zw}, and we use some of the same 
notation. 

\section {Proof of Theorem \ref{thm:mballs}}
Let $S_{\mco}(\lambda)$
be the scattering matrix for the {\em Neumann} Laplacian on $\Real^d
\setminus \mco$, and set
$s_{\mco}(\lambda)=\det S_{\mco}(\lambda).$
Set $$E(z)=(1-z)\exp\left(\sum_1^d \frac{z^j}{j} \right).$$

In the lemma below we have not made any assumptions on 
the connectivity of $\Real^d \setminus \mco.$  If $\Real^d\setminus
\mco$ has a bounded component, then $\mcr_{\mco}$ contains 
the square roots, and the negative square roots, 
of eigenvalues of the Neumann
Laplacian on the bounded component(s).  These do not cause poles of the
determinant of the scattering matrix.  
We remark that it is possible to have $0\in \mcr_{\mco}$ here, if 
$\Real^d\setminus \mco$ has a bounded component, but not if $\Real^d\setminus
\mco$ is connected.
\begin{lemma}\label{l:rsp}
Let $d\geq 3$ be odd and $\mco \subset \Real^d$ be a smooth compact set.
Then 
$$s_{\mco}(\lambda)=e^{ic_{\mco}\lambda^d}\prod_{\lambda_j\in \mcr_{\mco},
\; \lambda_j \not = 0}
\frac{E(-\lambda/\lambda_j)}{E(\lambda/\lambda_j)}$$
for some real constant $c_{\mco}$.
\end{lemma}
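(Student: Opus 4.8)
The plan is to realize $s_{\mco}$ as a quotient of Hadamard products over the resonances and then to pin down the remaining exponential factor using the functional equation, a reality relation, and the high-energy behaviour of the scattering phase. First I would record the analytic structure of $s_{\mco}$: in odd dimensions it is meromorphic on $\Complex$, unitary on the real axis, and satisfies the functional equation $s_{\mco}(\lambda)s_{\mco}(-\lambda)=1$ coming from $S_{\mco}(\lambda)^{-1}=S_{\mco}(-\lambda)$. Using the polynomial bound on the resonance counting function, which guarantees that the genus-$d$ product $P(\lambda)=\prod_{\lambda_j\neq 0}E(\lambda/\lambda_j)$ converges to an entire function of order at most $d$ whose zeros are exactly the nonzero elements of $\mcr_{\mco}$, I would identify the divisor of $s_{\mco}$: its poles are the nonzero resonances with multiplicity $m_{\mco}$, and by the functional equation its zeros sit at the reflected points $-\lambda_j$. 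The resonances arising from interior Neumann eigenvalues occur in symmetric pairs $\pm\lambda_j$ whose factors cancel in the product, consistent with their not being poles of $s_{\mco}$. Establishing this pole–resonance correspondence with the correct multiplicities is the first main point; granting it, $s_{\mco}(\lambda)P(\lambda)/P(-\lambda)$ is entire, zero-free, and of order at most $d$, hence equals $e^{g(\lambda)}$ for a polynomial $g$ with $\deg g\le d$.

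Next I would whittle $g$ down. The functional equation forces $g(\lambda)+g(-\lambda)$ to be constant, so the even part of $g$ is constant and $g$ is an odd polynomial plus a constant. The resonance symmetry $\lambda_j\mapsto-\bar\lambda_j$, coming from $R_{\mco}(\lambda)^*=R_{\mco}(-\bar\lambda)$, gives $\overline{P(\bar\lambda)}=P(-\lambda)$ (using that $E$ has real Taylor coefficients), and combining this with the reality relation $\overline{s_{\mco}(\bar\lambda)}=s_{\mco}(-\lambda)$ yields $\overline{g(\bar\lambda)}=g(-\lambda)$ modulo an additive constant; this makes every odd coefficient of $g$ purely imaginary. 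At this stage $g(\lambda)=ic_0+i\sum_{k\ \mathrm{odd}}b_k\lambda^k$ with real $c_0,b_k$.

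The hard part is eliminating the subleading odd powers and the constant, and this is exactly where the Neumann hypothesis enters. On the real axis $|s_{\mco}|=1$, so the above identity reads $\phi(\lambda)=c_0+\sum_k b_k\lambda^k-2\arg P(\lambda)$ for the scattering phase $\phi$. I would invoke the known complete high-energy asymptotic expansion of the scattering phase, equivalently the heat/wave invariants underlying the Poisson formula $u_{\mco}(t)=\sum_j e^{i\lambda_j|t|}$, to compare the polynomial part of $\phi+2\arg P$ against what the resonance data alone supplies, forcing $b_1=\cdots=b_{d-2}=0$ and leaving only $ib_d\lambda^d$. Finally, evaluating at $\lambda=0$ and using $s_{\mco}(0)=1$ for the Neumann problem kills the constant $c_0$; it is precisely this zero-energy normalization that is available for Neumann but not for Dirichlet, which accounts for a single free constant here versus two in \cite{h-zw}. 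Setting $c_{\mco}=b_d$ then gives the claimed formula.

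I expect the two delicate points to be the exact multiplicity bookkeeping in the divisor computation, including the clean cancellation of the interior-eigenvalue contributions, and the extraction of the polynomial part from the scattering-phase asymptotics. If a sharper factorization of $s_{\mco}$ in terms of scattering poles is already available in the literature on the trace and Poisson formulas, I would cite it to shortcut the first and third steps; otherwise the growth-plus-divisor argument above supplies the Hadamard form directly.
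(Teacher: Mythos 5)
Your overall skeleton (Hadamard-type factorization from the literature, functional equation killing the even part of $g$, the reality relation making the surviving coefficient real) matches the paper's first and last steps: the paper simply cites \cite{zwpf} for $s_{\mco}(\lambda)=e^{ig_{\mco}(\lambda)}\prod E(-\lambda/\lambda_j)/E(\lambda/\lambda_j)$ with $\deg g_{\mco}\le d$, then uses $s_{\mco}(\lambda)s_{\mco}(-\lambda)=1$ and $\overline{s_{\mco}(\lambda)}=s_{\mco}(-\lambda)$ on the real axis exactly as you do. But your central step --- eliminating the intermediate odd coefficients $b_1,\dots,b_{d-2}$ by ``comparing the polynomial part of $\phi+2\arg P$'' with high-energy scattering-phase asymptotics --- is a genuine gap. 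The identity $\phi=g+2\arg P$ (up to constants) is just the definition of $g$; to extract $g$ from it you would need an independent asymptotic expansion of $\arg P(\lambda)$, i.e.\ of a sum over all resonances, which is not available from general principles, and the heat/wave invariants give the expansion of $\phi$ only in terms of the unknown geometry of $\mco$. No cancellation mechanism is identified, and in fact none can be boundary-condition-blind: if such a high-energy argument worked it would apply verbatim to the Dirichlet problem and reduce the two free constants of \cite{h-zw} (the coefficients of $\lambda$ and $\lambda^3$ in $\Real^3$) to at most a constant term, contradicting the known structure there. This also shows your accounting of ``why Neumann has one constant'' is off: the Dirichlet/Neumann difference is not merely the normalization $s_{\mco}(0)=1$, which only controls the constant term.

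What the paper actually does at this step is a \emph{zero}-energy, not high-energy, argument, and it is where all of $g^{(0)}(0),\dots,g^{(d-2)}(0)$ die at once. Using the representation of \cite{p-zw}, $S_{\mco}(\lambda)=I+A_{\mco}(\lambda)$ with $A_{\mco}(\lambda)=\tilde c_d\lambda^{d-2}\,{\mathbb E}^{\chi_3}_+(\lambda)[\Delta_{\Real^d},\chi_1]R_{\mco}(\lambda)[\Delta_{\Real^d},\chi_2]\,^t{\mathbb E}_-^{\chi_3}(\lambda)$, one observes that for \emph{Neumann} boundary conditions $R_{\mco}(0)[\Delta_{\Real^d},\chi_2]\,^t{\mathbb E}_-^{\chi_3}(0)=\chi_2$, because $[\Delta_{\Real^d},\chi_2]\,^t{\mathbb E}_-^{\chi_3}(0)=\Delta_{\Real^d}\chi_2$ and the cutoff $\chi_2$ (being constant near $\partial\mco$) satisfies the Neumann condition; then ${\mathbb E}^{\chi_3}_+(0)[\Delta_{\Real^d},\chi_1]\chi_2=\int_{\Real^d}\Delta_{\Real^d}\chi_1=0$. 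Hence $A_{\mco}(\lambda)$ vanishes to order $d-1$ at $\lambda=0$, so $s_{\mco}(\lambda)-1=O(|\lambda|^{d-1})$, which (since the product factor is $1+O(|\lambda|^{d+1})$) forces $g^{(j)}(0)=0$ for $j=0,\dots,d-2$; the functional equation then kills the $\lambda^{d-1}$ coefficient and reality makes the $\lambda^d$ coefficient real. It is precisely the extra order of vanishing of $A_{\mco}$ at $\lambda=0$ (order $d-1$ for Neumann versus $d-2$ for Dirichlet, where $\chi_2$ fails the boundary condition) that accounts for one unknown constant instead of two. Your proposal is missing this idea, and the high-energy route you substitute for it does not close.
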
 
For 
a large number of more 
 general
self-adjoint operators (for example, for
the Laplacian with different boundary conditions),
the resonances determine the determinant of the scattering matrix up to two
unknown constants, e.g. \cite{h-zw}.  
\begin{proof}It follows from \cite{zwpf} that
$$s_{\mco}(\lambda)=e^{i g_{\mco}(\lambda)}\prod_{\lambda_j\in \mcr_{\mco}, \; \lambda_j \not =0}
\frac{E(-\lambda/\lambda_j)}{E(\lambda/\lambda_j)}$$
where $g_{\mco}(\lambda)$ is a polynomial of degree at most $d$.
It remains only to show that $g_{\mco}(\lambda)= c_{\mco}\lambda^d$.

We now use the expression of \cite{p-zw}
for the scattering matrix.
Choose $\rho'>0$ so that $\mco$ is contained in the ball of radius $\rho'$
 centered
at the origin.  For $\psi\in C_c^{\infty}(\Real^d)$,
let 
$${\mathbb E}_{\pm}^{\psi}(\lambda):L^2(\Real^d)\rightarrow
 L^2(\Sphere^{d-1})$$
be the operator with Schwartz kernel $\psi(x)\exp(\pm i \lambda \langle x, 
\omega \rangle )$.  For $i=1,\; 2,\; 3$, choose $\chi_i \in C^{\infty}_c
(\Real^d)$ so that $\chi_i\equiv 1$ if $|x|<4+\rho'$
 and $\chi_{i+1}\equiv
1$ on the support of $\chi_i$, $i=1,\; 2.$    Set
\begin{equation}\label{eq:A}
A_{\mco}(\lambda)= \tilde{c}_d\lambda^{d-2}{\mathbb E}^{\chi_3}_+(\lambda)
[\Delta_{\Real^d}, 
\chi_1]
R_{\mco}(\lambda)[\Delta_{\Real^d},\chi_2]  ^t{\mathbb E}_-^{\chi_3}(\lambda).
\end{equation}
Here $\tilde{c}_d=i\pi(2\pi)^{-d}$
and $^t{\mathbb E}_-^{\chi_3}$ denotes the transpose of 
${\mathbb E}_-^{\chi_3}$.
Then the scattering matrix $S_{\mco}(\lambda)$ associated to $P$ is given by 
\begin{equation}\label{eq:S}
S_{\mco}(\lambda)=I+A_{\mco}(\lambda).
\end{equation}

Next, note that for the Neumann Laplacian, 
$$
R_{\mco}(0)[\Delta_{\Real^d},\chi_2]  ^t{\mathbb E}_-^{\chi_3}(0)= \chi_2$$
since $[\Delta_{\Real^d},\chi_2]^t{\mathbb E}_-^{\chi_3}(0)=
\Delta_{\Real^d} \chi_2$, and 
$\chi_2$ satisfies the Neumann boundary conditions.  But then 
$${\mathbb E}^{\chi_3}_+(0)[\Delta_{\Real^d}, 
\chi_1]\chi_2 = \int_{\Real^d} \Delta_{\Real^d} \chi_1  =0$$
so that $A_{\mco}(\lambda)$ vanishes to order at least $d-1$ at $\lambda=0$.
Thus
 \begin{equation}
s_{\mco}(\lambda)-1=\det(I+A_{\mco}(\lambda))-1=O(|\lambda|^{d-1})
\end{equation}
at $\lambda =0$.  Using the fact that $E(\lambda/\lambda^j)=O(|\lambda|^{d})
$ at $\lambda=0$, we see that $g^{(j)}(\lambda)=0$ for $j=0,...,d-2$.
It is also well known that for $\lambda \in \Real$, 
$s_{\mco}(\lambda)s_{\mco}(-\lambda)=1$, implying that $g^{(d-1)}(0)=0$.
Finally, since for $\lambda \in \Real$ $\overline{s_{\mco}(\lambda)}
= s_{\mco}(-\lambda)$, we have that $g^{(d)}(0)$ is real.  This finishes the 
proof.
\end{proof}

We fix some notation.  For a smooth compact set $\mco \subset \Real^d$ with 
boundary $\partial \mco$ of dimension $d-1$, let $\kappa_{1,
\partial \mco} \dots ,\kappa_{d-1, \partial \mco}$ denote the principal curvatures of 
$\partial \mco$.  We use the normalization that the mean curvature $H_{\partial \mco}$
is $
H_{\partial \mco}= \sum_1^{d-1}\kappa_{j,
\partial \mco}$.  
\begin{lemma}\label{l:heatinv} Let
$d\geq 3$ be odd and let $\mco\subset \Real^d$ be a smooth compact set with boundary $\partial \mco$
of dimension $d-1$.  Then $\mcr_{\mco}$ determines
$$
\Vol(\partial \mco), \; \int_{\partial \mco}H_{\partial \mco},\; \text{ and }\; 
\int _{\partial \mco}\left(13 H^2_{\partial \mco}
+ 2\sum_1^{d-1}\kappa^2_{j,\partial \mco}\right).$$
\end{lemma}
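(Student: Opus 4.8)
The plan is to route from the resonance set to the small-time asymptotics of the relative heat trace through the scattering determinant, and then to match the relevant heat coefficients against the three stated geometric integrals. First I would use Lemma \ref{l:rsp}, which writes $s_{\mco}(\lambda)$ as an explicit product over $\mcr_{\mco}$ times the factor $e^{ic_{\mco}\lambda^d}$; thus knowing $\mcr_{\mco}$ determines $s_{\mco}(\lambda)$, and hence the scattering phase $\sigma_{\mco}(\lambda)=\frac{1}{2\pi i}\log s_{\mco}(\lambda)$, up to the single unknown real constant $c_{\mco}$ entering through the term $\frac{c_{\mco}}{2\pi}\lambda^d$. In particular every coefficient in the large-$\lambda$ asymptotic expansion of $\sigma_{\mco}$ other than the coefficient of $\lambda^d$ is determined by $\mcr_{\mco}$.

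Next I would pass from $\sigma_{\mco}$ to the relative heat trace
$$h_{\mco}(t)=\Tr\left(e^{-t\Delta_{\Real^d\setminus\mco}}-e^{-t\Delta_{\Real^d}}\right),$$
using the Birman--Krein formula, which expresses $h_{\mco}(t)$ as a transform of $\sigma_{\mco}$. The heat-parametrix construction for the exterior Neumann problem gives an expansion $h_{\mco}(t)\sim\sum_{j\geq 0}a_j(\mco)\,t^{(j-d)/2}$ as $t\to 0^+$, and the transform puts this into term-by-term correspondence with the expansion of $\sigma_{\mco}$: the coefficient of $\lambda^{d-j}$ in $\sigma_{\mco}$ is a universal nonzero constant times $a_j(\mco)$. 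The leading coefficient $a_0$ is a multiple of $\Vol(\mco)$ and corresponds precisely to the $\lambda^d$ term polluted by $c_{\mco}$; the coefficients $a_1,a_2,a_3$ correspond to the $\lambda^{d-1},\lambda^{d-2},\lambda^{d-3}$ terms and are therefore determined by $\mcr_{\mco}$.

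Finally I would identify $a_1,a_2,a_3$ with the stated geometric quantities. These are the boundary heat invariants for the Neumann Laplacian; since $\Real^d$ is flat, all ambient-curvature contributions drop out and only the extrinsic terms survive, leaving, up to explicit universal constants,
$$a_1\propto\Vol(\partial\mco),\quad a_2\propto\int_{\partial\mco}H_{\partial\mco},\quad a_3\propto\int_{\partial\mco}\left(13\,H^2_{\partial\mco}+2\sum_1^{d-1}\kappa^2_{j,\partial\mco}\right),$$
so the three integrals are resonance invariants, as claimed. I expect the main obstacle to be the bookkeeping at the interface of the two expansions --- verifying that the lone undetermined constant $c_{\mco}$ contaminates only $a_0$ and leaves $a_1,a_2,a_3$ intact --- together with pinning down the precise Neumann invariant $a_3$: extracting the specific combination $13H^2_{\partial\mco}+2\sum_1^{d-1}\kappa^2_{j,\partial\mco}$, with exactly these coefficients, from the universal heat-coefficient formulas, since any normalization slip there would corrupt the later comparison with the ball.
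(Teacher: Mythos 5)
Your proposal follows essentially the same route as the paper: Lemma \ref{l:rsp} determines $s_{\mco}(\lambda)$ up to the single factor $e^{ic_{\mco}\lambda^d}$, a Birman--Krein-type trace formula converts $\frac{d}{d\lambda}\log s_{\mco}$ into the relative heat trace, the unknown constant contaminates only the $t^{-d/2}$ (volume) coefficient $a_0$, and the Branson--Gilkey Neumann heat invariants identify $a_1,a_2,a_3$ with the three stated integrals. The only detail worth adding is that when $\Real^d\setminus\mco$ has bounded components the trace formula carries an extra term $\frac{1}{2}\sum_{\lambda_j\in\mcr_{\mco}\cap\Real}e^{-t\lambda_j^2}$ beyond the transform of the scattering phase, which is harmless for your argument since it too is determined by $\mcr_{\mco}$.
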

\begin{proof}
As $t\downarrow 0$,
\begin{equation}\label{eq:he}
\Tr (e^{-t\Delta_{\Real^d \setminus \mco}}-e^{-t\Delta_{\Real^d}})
\simeq t^{-d/2}\sum_{n=0}^{\infty} t^{n/2} a_n.
\end{equation}
The trace in (\ref{eq:he}) can be made precise in exactly the 
same manner that (\ref{eq:wavetrace}) is made precise in (\ref{eq:wavetrace2}).
Explicit expressions for the first few $a_n$ can be 
found in  \cite{br-gi}.  For Neumann boundary conditions in $\Real^d$, we have 
\begin{align*} 
a_0& =\alpha_0 \Vol(\mco)\\
a_1&= \alpha_1 \Vol(\partial \mco)\\
a_2& = \alpha_2 \int _{\partial \mco} H_{\partial \mco}\\
a_3& = \alpha_3 \int _{\partial\mco}\left(13 H^2_{\partial \mco}
+ 2\sum_1^{d-1}\kappa^2_{j,\partial \mco}\right)
\end{align*}
where the $\alpha_i$ are nonzero constants which depend on the dimension.
Using Lemma \ref{l:rsp}, (\ref{eq:he}), 
$$\Tr (e^{-t\Delta_{\Real^d \setminus \mco}}-e^{-t\Delta_{\Real^d}})
= \frac{1}{2\pi i}\int_{-\infty}^{\infty} 
e^{-t\lambda^2}\frac{d}{d\lambda}\log s_{\mco}(\lambda)d\lambda
+\frac{1}{2}\sum _{\lambda_j \in \mcr_{\mco}\cap \Real} e^{-t\lambda_j^2}$$
(e.g. \cite[Corollary 2.10]{robert} for the case of $\Real^d\setminus \mco$ 
connected, or \cite{bb})
and the expressions for the $a_i$ above, we prove the lemma.
\end{proof}

We are now ready for the proof of our theorem.
\begin{proof}[Proof of Theorem \ref{thm:mballs}]
Suppose $\mco_1$, $\mco_2$ are as in the statement of the theorem,
with $\mcr_{\mco_1}=\mcr_{\mco_2}.$
 Then by Lemma \ref{l:heatinv}, 
\begin{multline}\label{eq:same}
\Vol(\partial \mco_1)=\Vol(\partial \mco_2),\;  
\int_{\partial \mco_1}H_{\partial \mco_1}= 
\int_{\partial \mco_2}H_{\partial \mco_2}
\;\;
\text{and}\\ \int _{\partial \mco_1}\left(13 H^2_{\partial \mco_1}
+ 2\sum_1^{d-1}\kappa^2_{j,\partial \mco_1}\right)=
\int _{\partial \mco_2}\left(13 H^2_{\partial \mco_2}
+ 2\sum_1^{d-1}\kappa^2_{j,\partial \mco_2}\right).
\end{multline}
By the Cauchy-Schwarz inequality, for $i=1,\; 2$, 
$$\frac{ \left( \int_{\partial \mco_i}H_{\partial \mco_i}\right)^2}
{\Vol(\partial \mco_i)} \leq \int_{\partial \mco_i}H^2_{\partial \mco_i}$$
with equality if and only if $H_{\partial \mco_i}$ is constant.
Likewise, $\sum_1^{d-1} \kappa^2_{j,\partial \mco_i}\geq \frac{1}{d-1}
\left( \sum_1^{d-1}\kappa_{j,\partial \mco_i}\right)^2$, with equality 
if and only if $\kappa_{1,\partial \mco_i}=\kappa_{2,\partial \mco_i}
=\cdot \cdot 
\cdot =\kappa_{d-1,\partial \mco_i}$.
Since equality holds for $\mco_1$, the three equalities of
(\ref{eq:same}) mean that $H_{\partial \mco_2}$
is a constant and the principal
curvatures of $\partial \mco_2$ are all equal.  Thus $\mco_2$ must be the
disjointn union of $l$ balls of fixed radius $\rho'$.  Again using the first 
two equalities of (\ref{eq:same}), we see that we must have $l=m$ and 
$\rho' =\rho$.    
\end{proof}

We remark that the proof of this theorem is a bit delicate, in the sense that 
it is important for us that
in the expression for $a_3$  the coefficients of 
$H^2_{\partial \mco}$ and $\sum_1^{d-1}\kappa^2_{j,\partial \mco}$ 
have the same 
sign.  However, if $\mcr_{\mco}=\mcr_{B(\rho)}$ 
{\em and} we know a priori that $\mco$ is a ($d$-dimensional) 
{\em convex} set, then we do not need to 
use the coefficient $a_3$ to prove that $\mco$ is, up to translation, $B(\rho)$.
Instead, one can use first two equalities
of (\ref{eq:same}) and the fact that in the Alexandrov-Fenchel inequality
$$\left( \frac{\Vol(\partial \mco)}{\Vol(\partial B(\rho))}\right)^{1/(d-1)}
\leq 
\left( \frac{\int_{\partial \mco}H_{\mco}}{ \int_{\partial B(\rho)}H_{B(\rho)}}\right)^{1/(d-2)} $$
equality holds if and only if $\mco$ is a ball 
\cite[Chapter 4, Section 9]{al} or
\cite{alorig}.

\small
\noindent
{\sc 
Department of Mathematics\\
University of Missouri\\
Columbia, Missouri 65211\\
e-mail: {\tt tjc@math.missouri.edu} }
\end{document}